\documentclass[pdflatex,sn-mathphys-num]{sn-jnl}

\usepackage{graphicx}%
\usepackage{multirow}%
\usepackage{amsmath,amssymb,amsfonts}%
\usepackage{amsthm}%
\usepackage{mathrsfs}%
\usepackage[title]{appendix}%
\usepackage{xcolor}%
\usepackage{textcomp}%
\usepackage{manyfoot}%
\usepackage{booktabs}%
\usepackage{algorithm}%
\usepackage{algorithmicx}%
\usepackage{algpseudocode}%
\usepackage{listings}%
\usepackage{algorithm}
\usepackage{algpseudocode}

\theoremstyle{thmstyleone}%
\newtheorem{theorem}{Theorem}
\newtheorem{proposition}[theorem]{Proposition}%

\theoremstyle{thmstyletwo}%
\newtheorem{remark}{Remark}%

\theoremstyle{thmstylethree}%

\newcommand{\LL}{\mathbb{L}}
\begin{document}

\title[Article Title]{A Time-Varying Branching Process Approach to Model Self-Renewing Cells}

\author[1]{\fnm{Huyen} \sur{Nguyen}}\email{huyen.d.nguyen@uconn.edu}

\author[1]{\fnm{Haim} \sur{Bar}}\email{haim.bar@uconn.edu}

\author[1]{\fnm{Zhiyi} \sur{Chi}}\email{zhiyi.chi@uconn.edu}

\author[1]{\fnm{Vladimir} \sur{Pozdnyakov}}\email{vladimir.pozdnyakov@uconn.edu}

\affil[1]{\orgdiv{Department of Statistics}, \orgname{University of Connecticut}, \orgaddress{\street{215 Glenbrook Road}, \city{Storrs}, \postcode{06269}, \state{Connecticut}, \country{United States}}}




\abstract{Stem cells, through their ability to produce daughter stem cells and differentiate into specialized cells, are essential in the growth, maintenance, and repair of biological tissues. Understanding the dynamics of cell populations in the proliferation process not only uncovers proliferative properties of stem cells, but also offers insight into tissue development under both normal conditions and pathological disruption. In this paper, we develop a continuous time branching process model with time-dependent offspring distribution to characterize stem cell proliferation process. We derive analytical expressions for mean, variance, and autocovariance of the stem cell counts, and develop likelihood-based inference procedures to estimate model parameters. Particularly, we construct a forward algorithm likelihood to handle situations when some cell types cannot be directly observed. Simulation results demonstrate that our estimation method recovers the time-dependent division probabilities with good accuracy.}

\keywords{branching process, stochastic process, stem cell, cell proliferation, forward algorithm}



\maketitle

\section{Introduction}\label{sec1}

Cell proliferation is a fundamental process underlying the growth, maintenance, and repair of biological tissues. Stem cells, in particular, possess the ability to produce identical daughter stem cells and differentiate into specialized cell types that sustain tissue function. There are three types of stem cell division and differentiation outcomes: symmetrical self-renewal to generate two stem cells, asymmetrical division to generate one stem cell and one differentiated cell, and symmetrical differentiation to generate two differentiated cells. The ability of stem cells to both self-renew and differentiate plays a pivotal role both in generative processes during early development and in tissue regeneration following injury. For example, during embryonic development, stem cells drive tissue formation through programmed proliferation and differentiation, as seen in processes such as ependymogenesis and neurogenesis \cite{Co2018, Ba2018}. Hematopoietic stem cells give rise to all blood cell types - red blood cells, white blood cells, and platelets - to maintain lifelong blood production and regenerate the blood system after injury \cite{marciniak2009modeling, wilson2008hematopoietic}. In oncology, cancer stem cells also display self-renewal and differentiation ability to drive and sustain tumor growth. Distinct differences in stemness and proliferative potential between cancer stem cells and non-stem cancer cells have been observed across various tumor types such as brain, breast, colon, and prostate cancer \cite{al2003prospective, cammareri2008isolation, fioriti2008cancer, singh2003identification}.

It is important to study the dynamics of cell populations in cell proliferation process since these dynamics connect individual cell behaviors to tissue-level outcomes, including how tissues grow, maintain homeostasis, and regenerate after injury. In the aforementioned ependymogenesis process, regional stem cells within the ventricular and subventricular zones proliferate to produce both stem cells and differentiated ependymal cells that line the brain’s ventricles, thereby creating a critical barrier between the cerebrospinal fluid and the interstitial fluid of the surrounding brain parenchyma \cite{Co2018}. In the context of cancer study, these population dynamics are particularly important since they reflect the growth of tumor and disease progression.

Quantitative modeling of these processes can provide a framework to predict population behavior and discover intrinsic proliferative properties of stem cell self-renewal and differentiation ability. Deterministic models with ordinary differential equations are utilized to describe the temporal dynamics of cell population on average \cite{marciniak2009modeling, beretta2012mathematical, weekes2014multicompartment, chen2016overshoot}. However, since the dynamics of cell populations emerge from the stochastic behavior of individual stem cells, each of which may divide or differentiate at random times, probabilistic modeling approaches, including birth-death process, Gaussian jump model, and non-homogeneous Markov chain \cite{Tu2009, Ba2018, Sc2022}, are more verisimilar for describing cell proliferation process. Branching process provides a natural and mathematically rigorous framework for describing stochastic reproduction phenomena. In the classical Galton-Watson process and its extensions, each individual acts independently and give rise to a random number of offspring according to specified probability distributions \cite{watson1875probability,harris1948branching, harris1963theory, asmussen1983branching}. The branching process models have become an important tool to investigate population size, growth and extinction in biology, epidemiology, and genetics. In the context of cell proliferation, branching process models have been applied to describe stem cell lineages, where each stem cell may be viewed as an independent reproducing unit. MacMillian et al. \cite{macmillan2011seeing} and Miguez \cite{miguez2015branching} both developed models under the branching process theory for the stem cell proliferation and differentiation during embryonic development. In these models, time is discretized using the average cell cycle length as the unit of generations. Instead of tracking individual cell divisions in continuous time, the model advances steps corresponding to the mean duration of a cell cycle. In addition, MacMillian et al. \cite{macmillan2011seeing} explored the multi-type branching process extension to simulate data for a cell proliferation with more than one cell type with proliferative potential.

However, the classical branching process models typically assume that the reproduction parameters are constant through time. This assumption simplifies the analysis but may not be able to capture the full stochastic nature of cell proliferation processes. Experimental studies have revealed that the probabilities governing self-renewal and differentiation are not always stationary. For example, data collected by Scanlon et al. \cite{Sc2022} on bipotent megakaryocytic erythroid progenitor (MEP) clonal
expansion and differentiation shows that probabilities of division outcomes change over time: self-renewing events are more likely earlier in the process and differentiating events are more likely toward the end of the process. The previously discussed branching process framework in Miguez \cite{miguez2015branching} adjusted to the time-dependent proliferation by obtaining the average probabilities of division outcomes at different developmental stages.

In this paper, we develop a continuous-time, time-dependent branching process for stem cell proliferation to address the limitation of the constant reproduction branching process model. In our formulation, the probabilities of division and differentiation outcomes are functions of time. To reflect the variability inherent in stem cell populations, our model incorporates three distinct cell types: viable stem cells capable of division, nonviable stem cells that do not have proliferative potential, and terminally differentiated cells. We also incorporate stochasticity in the timing of the cell division events by assuming that each division occurs at a random time. In particular, the
interarrival times between stem cell divisions are exponentially distributed with rate proportional to current number of viable stem cells. This reflects the assumption that a greater number of viable stem cells leads to more frequent stem cell divisions and shorter waiting times in-between. Our work makes the following contributions. We first establish a rigorous mathematical formulation of the branching process model and derive analytical expressions for key parameters, including mean population size, variance, and lower bound of the expected extinction time for stem cells population. We also construct likelihood function based on the observed counts of each cell type for the probabilities of division and differentiation outcomes, providing a framework for statistical inference. Finally, we develop a forward-algorithm-based likelihood approach to situations where viable and nonviable stem cells cannot be distinguished, allowing for estimation of model parameters even when some cell types cannot be directly observed. This gives a practical method to model stem cell dynamics in experiments where direct observation of all cell types is not possible.

The remainder of this paper is organized as follows. In Section \ref{model}, we introduce the time-dependent branching process model and derive analytical expressions for key quantities. Section \ref{mle} presents the likelihood-based framework for parameter estimation, including the forward-algorithm approach for partially observed cell types. Section \ref{simulation} provides a simulation study to assess the performance of the likelihood-based estimator, Section \ref{application} illustrates an example of experiment data application, and Section \ref{discussion} concludes with a discussion of the findings and potential directions for future work.

\section{Time-Dependent Branching Process Model for Stem Cell Proliferation Process}\label{model}
\subsection{Model Description}
We consider a stochastic cell proliferation process consisting of populations of stem cells and differentiated cells (FC). Stem cells are classified as viable stem cells (SC) and nonviable (or duds) stem cells (DC). Let $X(t), Y(t),$ and $Z(t)$ denote the number of viable stem cells, differentiated cells, and nonviable stem cells at time $t$, respectively. Only viable stem cells are capable of division and differentiation. The stochastic nature of the process arises from the random timing and outcome of division events. Each viable stem cell divides independently in continuous time with rate $r$. Biologically, this rate could be interpreted as the average frequency that a viable stem cell exits quiescence and enters the cell cycle. Consequently, when there are $X(t)$ viable stem cells at time $t$, the waiting time until the next division event in the entire population is exponentially distributed with rate $rX(t)$. Thus, the overall frequency of division events increases with the size of the viable stem cell population.

At time $t$, a viable stem cell can undergo one of the four types of division with the time-dependent probabilities $p_1(t), p_2(t), p_3(t)$, and $p_4(t)$:
\begin{align*}
    &\text{Symmetric self-renewal} &  SC \rightarrow SC + SC,  &  \quad \text{probability }p_1(t)  \\
    &\text{Asymmetric division} & SC \rightarrow SC + FC, & \quad \text{probability }p_2(t) \\
    &\text{Symmetric differentiation} &  SC \rightarrow FC + FC, & \quad \text{probability }p_3(t)\\
    &\text{Self renewal with nonviable stem cell} & SC \rightarrow SC + DC,& \quad \text{probability }p_4(t)
\end{align*}
The division probabilities $p_j(t)$ are time-dependent. For example, $p_1(t)$ may dominate early to expand the stem cell pool, while $p_3(t)$ may increase toward the end of the cell proliferation process. Since $p_j(t)$ are probabilities, they need to satisfy the conditions $0 \leq p_j(t) \leq 1$ and $\sum_{j=1}^n p_j(t) = 1 \ \forall t$. To illustrate, we assume the following form for $p_j(t)$
\begin{equation} \label{lorentzian}
    \begin{cases}
        p_1(t) = \frac{p_1}{1+c_1(t-m_1)^2},\\
        p_2(t) = \frac{p_2}{1+c_2(t-m_2)^2},\\
        p_3(t) = 1- \frac{p_1}{1+c_1(t-m_1)^2} - \frac{p_2}{1+c_2(t-m_2)^2}- \frac{p_4}{1+c_4(t-m_4)^2}, \\
        p_4(t) = \frac{p_4}{1+c_4(t-m_4)^2},
    \end{cases}
\end{equation}
with the constraints, $0 \leq p_1, p_2, p_4 \leq 1$, $p_1 + p_2+p_4 \leq 1$, and $c_1, c_2, c_4 \geq 0$. In this example, the functions $p_1(t), p_2(t),$ and $p_4(t)$ are Lorentzian shape commonly used in physics. In the context of probability and statistics, they can be rewritten as a scaled version of the Cauchy probability density function by reparameterizing $c_i$. The functions are parameterized by $p_i$, which are the maximum of the function attained at $t = m_i$, and $c_i$, which governs the rate of decay away from the maximum. With this form of functions for $p_j(t)$, $p_3(t) \rightarrow 1$ as $t\rightarrow \infty$. As a result, the differentiation events will dominate toward the end of the proliferation process. This behavior aligns with the dynamics observed in the MEP clonal expansion and differentiation data observed by Scanlon et al. \cite{Sc2022}.

This proliferation process can be naturally represented as a branching process, in which each viable stem cell acts as a branching individual producing “offspring” according to the division probabilities $p_j(t)$. Differentiated cells correspond to terminal offspring, and nonviable stem cells correspond to individuals that cannot further proliferate. When the number of viable stem cell $X_i$ is positive, a stem cell division will occur at time $T_i$, with interarrival times that are exponentially distributed at a rate proportional to $X_i$. Thus, given division times $T_i$, the change of viable stem cells, differentiated cells, and nonviable stem cells have the following marginal distributions
\begin{equation}
    P(\Delta X_i, \Delta Y_i, \Delta Z_i \vert T_i) = \begin{cases}
        p_1(T_i), &  (\Delta X_i, \Delta Y_i, \Delta Z_i) = (1,0,0)\\
        p_2(T_i), & (\Delta X_i, \Delta Y_i, \Delta Z_i) = (0, 1, 0)\\
        p_3(T_i), & (\Delta X_i, \Delta Y_i, \Delta Z_i) = (-1, 2, 0)\\
        p_4(T_i), & (\Delta X_i, \Delta Y_i, \Delta Z_i) = (0, 0, 1).
    \end{cases}
    \label{marginal}
\end{equation}
A deterministic mean-field model can capture the average behavior of the cell population by assuming individual viable stem cells reproduce at their average rate. Let $S(t), F(t), D(t)$ denote the deterministic population trajectories of viable stem cells, differentiated cells, and nonviable stem cells. If we assume that the average rates at which viable stem cells, differentiated cells, and nonviable stem cells are reproduced are $r[p_1(t) - p_3(t)]S(t)$, $r[p_2(t) + 2p_3(t)] S(t)$, and $rp_4(t)S(t)$, respectively, we obtain the ordinary differential equation
\begin{equation}\label{transition1}
\begin{pmatrix}
\frac{dS(t)}{dt}\\\frac{dD(t)}{dt}\\\frac{dF(t))}{dt}
\end{pmatrix} =
\begin{pmatrix}
r[p_1(t)-p_3(t)]S(t)\\rp_4(t)S(t)\\r[p_2(t)+2p_3(t)]S(t)
\end{pmatrix} \,.
\end{equation}
A limitation of the deterministic model is that it treats the probabilities of division and differentiation events as average rates, and therefore does not capture the randomness of the underlying process. As a result, the deterministic model cannot account for the variability between realizations or extinction events.
\subsection{Mean and Covariance Structure of Viable Stem Cell Population}
In this subsection, we derive the expected value of the viable stem cell population in the branching process model and show that it coincides with the solution of the deterministic differential equation. We also compute the variance and autocovariance, which measure the variability of the population at a single time point and across time points, respectively.
\begin{proposition}[Moments of Number of Stem Cells] \label{prop:moments}
For each fixed time point, the expected value, variance, and autocovariance of the stem cell population are given by:

\begin{enumerate}
    \item \textbf{Expected Value (Mean Function).}
    \[
     \mathbb{E}[X(t)] = S(t) = S_0 \exp \Big\{r \int_0^t [ p_1(u) - p_3(u) du]\Big\}.
    \]

    \item \textbf{Variance.}
    \[
        \operatorname{Var}(X(t))
        = r S(t)^2 \int_0^t \frac{p_1(u) + p_3(u)}{S(u)}du.
    \]

    \item \textbf{Autocovariance Function.}
    For $t>u,$
    \[
       \operatorname{Cov}(X(t), X(u)) = S(t)S(u) r\int_0^u\frac{p_1(v) + p_3(v)}{S(v)}dv = \frac{S(t)}{S(u)} \operatorname{Var}(u).
    \]
\end{enumerate}
\end{proposition}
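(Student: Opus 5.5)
The approach is to view $X(t)$ as a time-inhomogeneous birth--death process and derive ODEs for its first two moments from Kolmogorov's forward equation, or equivalently from the semimartingale (Dynkin) decomposition. Each viable stem cell divides at rate $r$. The jump $\Delta X=+1$ occurs with probability $p_1(t)$, the jump $\Delta X=-1$ with probability $p_3(t)$, and otherwise $\Delta X=0$. So $X$ jumps up at total rate $rp_1(t)X(t)$ and down at total rate $rp_3(t)X(t)$. For any polynomial $f$ with suitable integrability, applying the generator gives
\[
\frac{d}{dt}\mathbb{E}f(X(t)) = r\,\mathbb{E}\Big[X(t)\big(p_1(t)(f(X+1)-f(X)) + p_3(t)(f(X-1)-f(X))\big)\Big].
\]
I will justify differentiating under the expectation by noting that $X(t)$ is dominated by a pure-birth (Yule) process with rate $r$, all of whose moments are finite and locally bounded.

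For the mean, take $f(x)=x$. This gives $m'(t)=r(p_1(t)-p_3(t))m(t)$ with $m(0)=S_0$, which is exactly the first line of the mean-field system \eqref{transition1}. Solving the linear ODE yields $m(t)=S(t)=S_0\exp\{r\int_0^t(p_1-p_3)\}$.

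For the second moment, take $f(x)=x^2$. The increments are $2x+1$ for an up-jump and $-2x+1$ for a down-jump, so with $M_2(t)=\mathbb{E}X(t)^2$ we get
\[
M_2' = 2r(p_1-p_3)M_2 + r(p_1+p_3)S.
\]
Setting $V=M_2-S^2$ and using $(S^2)'=2r(p_1-p_3)S^2$ gives
\[
V' = 2r(p_1-p_3)V + r(p_1+p_3)S, \qquad V(0)=0,
\]
since $X(0)=S_0$ is deterministic. The integrating factor is $S(t)^{-2}$, because $\frac{d}{dt}(V/S^2)=r(p_1+p_3)/S$. Integrating from $0$ to $t$ gives the stated variance formula.

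For the autocovariance, take $t>u$ and condition on $\mathcal F_u$. The Markov property and the branching property, or simply the linearity of the mean ODE started at time $u$, give
\[
\mathbb{E}[X(t)\mid\mathcal F_u]=X(u)\exp\Big\{r\int_u^t(p_1-p_3)\Big\}=X(u)\,S(t)/S(u).
\]
Hence $\mathbb{E}[X(t)X(u)]=\frac{S(t)}{S(u)}\mathbb{E}X(u)^2$, and subtracting $S(t)S(u)$ gives $\operatorname{Cov}(X(t),X(u))=\frac{S(t)}{S(u)}\operatorname{Var}(X(u))$. Substituting the variance formula gives the middle expression.

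The main technical obstacle is the rigorous justification of the moment ODEs. This requires showing that the local martingales $f(X(t))-f(X(0))-\int_0^t \mathcal{L}_s f(X(s))\,ds$ are true martingales for $f(x)=x,x^2$. I would handle this via the Yule-process domination bound $\mathbb{E}\sup_{s\le t}X(s)^k<\infty$, which makes the stochastic integrals uniformly integrable. Everything after that is routine linear-ODE solving.
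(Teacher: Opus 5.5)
Your proposal is correct and follows essentially the same route as the paper. Both derive the linear moment ODEs $m'=r(p_1-p_3)m$ and $M_2'=2r(p_1-p_3)M_2+r(p_1+p_3)S$ from the infinitesimal jump rates and justify the interchange of limit and expectation by domination by a Yule process; the paper phrases this as dominated convergence of conditional difference quotients, and you phrase it as the Dynkin local martingales being true martingales. Both then solve the ODEs (your integrating factor $S^{-2}$ gives the paper's variance formula) and get the autocovariance from $\mathbb{E}[X(t)\mid X(u)]=X(u)S(t)/S(u)$ and the tower property.
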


\begin{proof}
The solution of the differential equation \eqref{transition1} with initial condition
$S(0)=S_0$ is given by
\begin{equation}
\label{eq:S_solution}
S(t)=S_0 \exp\!\left\{ r\int_0^t \bigl[p_1(u)-p_3(u)\bigr]\,du \right\}.
\end{equation}
We first show that $E[X(t)] = S(t)$. Note that for a small $\Delta>0$, conditional on $X(t)$, the arrival times in the interval $[t, t+\Delta]$ essentially follow a Poisson process with rate $rX(t)$. Since $p_i(t)$ are continuous functions, we have
\begin{equation}
\label{eq:increment_mean}
\lim_{\Delta\to 0}\frac{1}{\Delta}E\!\left[X(t+\Delta)-X(t)\mid X(t)\right]=\bigl[p_1(t)-p_3(t)\bigr]\,rX(t),
\end{equation}
with probability 1.
At each arrival time, the number of stem cells can increase by at most one. As a consequence, the stochastic process $X(t)$ is dominated by the Yule process. Therefore, by the dominated convergence theorem  (see Siegrist (2021, Eq. (16.23.40) \cite{siegrist2021})) we obtain
\begin{equation}
\frac{d}{dt}E[X(t)]
=\bigl[p_1(t)-p_3(t)\bigr]\,rE[X(t)].
\end{equation}
Since $E[X(0)]=S_0$, uniqueness of solutions to linear ODEs implies
$E[X(t)]=S(t)$.

Next, we derive the variance of $X(t)$. Define $M(t)=E[X(t)^2]$ and $V(t)=\operatorname{Var}(X(t))=M(t)-S(t)^2$.
In a similar fashion, conditioning on $X(t)$ gives us
\begin{equation}
\label{eq:increment_second}
\lim_{\Delta\to 0}\frac{1}{\Delta}E\!\left[X(t+\Delta)^2-X(t)^2\mid X(t)\right]=\bigl[2X(t)(p_1(t)-p_3(t))+(p_1(t)+p_3(t))\bigr]\,rX(t)
\end{equation}
with probability 1. Taking expectations and letting $\Delta\to 0$ yields
\begin{equation}
\label{eq:Mprime}
M'(t)
=2\bigl[p_1(t)-p_3(t)\bigr]\,rM(t)
+\bigl[p_1(t)+p_3(t)\bigr]\,rS(t).
\end{equation}
Since $S'(t)=\bigl[p_1(t)-p_3(t)\bigr]rS(t)$, differentiation of
$V(t)=M(t)-S(t)^2$ yields
\begin{equation}
\label{eq:Vprime}
V'(t)
=2\bigl[p_1(t)-p_3(t)\bigr]\,rV(t)
+\bigl[p_1(t)+p_3(t)\bigr]\,rS(t).
\end{equation}
Assuming $V(0)=0$, the solution of \eqref{eq:Vprime} is
\begin{equation}
\label{eq:Vsolution}
V(t)
=rS(t)^2\int_0^t \frac{p_1(u)+p_3(u)}{S(u)}\,du,
\end{equation}
where $P(t)=\int_0^t[p_1(u)-p_3(u)]\,du$. Finally, for $t>u$, the tower property and conditional expectation give
\begin{equation}
\begin{split}
E[X(t)X(u)]
&=E\!\left[E[X(t)\mid X(u)]\,X(u)\right] \\
&=E\!\left[X(u)^2 \exp\{r[P(t)-P(u)]\}\right] \\
&=\frac{S(t)}{S(u)}\,E[X(u)^2].
\end{split}
\end{equation}
Using \eqref{eq:Vsolution} with $M(u)=S(u)^2+V(u)$, we obtain
\begin{equation}
\begin{split}
E[X(t)X(u)]-S(t)S(u)
&=\frac{S(t)}{S(u)}\,V(u).
\end{split}
\end{equation}
This completes the proof.
\end{proof}
We validate the theoretical moments of stem cell count using simulated data. Details on how the data is simulated is discussed in section \ref{simulation}. Figure \ref{fig:sim_plot} (left) and table \ref{tab:autocorrelation} illustrate an example of empirical mean, variance and autocorrelation of stem cell counts obtained from simulated data. The empirical results closely match with their theoretical counterparts derived in proposition \ref{prop:moments}.
\begin{figure}[ht]
\centering

\begin{minipage}{0.48\textwidth}
\centering
\includegraphics[width=\linewidth]{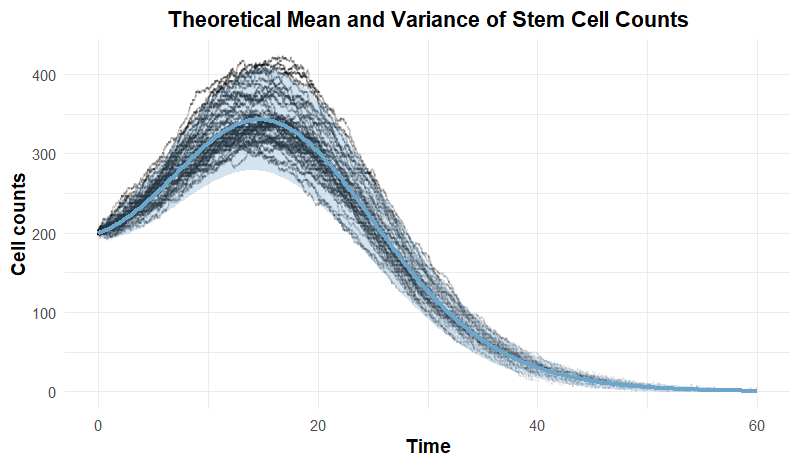}
\end{minipage}
\hfill
\begin{minipage}{0.48\textwidth}
\centering
\includegraphics[width=\linewidth]{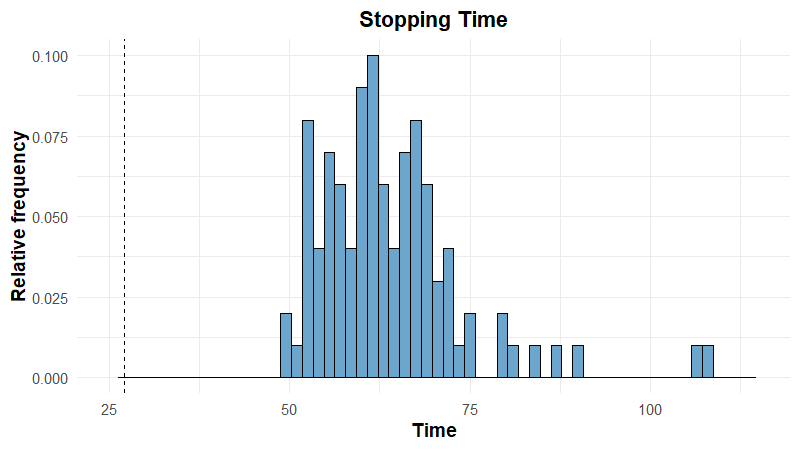}
\end{minipage}

\caption{Left: Fifty realizations of simulated stem cell count trajectories with initial count of 200, division rate of 0.2, and division probabilities $p_1(t) = \frac{0.55}{1 + 0.005(t-4)^2}$, $p_2(t) = \frac{0.15}{1 + (t-12)^2}$, $p_4(t) = \frac{0.20}{1+(t-20)^2}$, and $p_3(t) = 1-p_1(t)-p_2(t)-p_4(t)$. The blue curve shows the expected cell count and the shaded band indicates $\pm$2 standard deviations. Right: Distribution of stopping time, i.e. the time when all the viable stem cells are exhausted from 100 realizations of the simulated stem cell count trajectories with the same parameters, with the minimum expected stopping times is $\tau_{\min} = 26.491$.}
\label{fig:sim_plot}
\end{figure}

\begin{table}[ht]
\centering

\begin{minipage}{0.48\textwidth}
\centering
\textbf{Theoretical autocorrelation}

\vspace{0.3em}
\footnotesize
\begin{tabular}{c ccccc}
\toprule
$t$ & 5 & 10 & 15 & 20 & 25 \\
\midrule
 5  & 1.000 & 0.763 & 0.666 & 0.598 & 0.533 \\
10  & 0.763 & 1.000 & 0.872 & 0.783 & 0.698 \\
15  & 0.666 & 0.872 & 1.000 & 0.899 & 0.801 \\
20  & 0.598 & 0.783 & 0.899 & 1.000 & 0.891 \\
25  & 0.533 & 0.698 & 0.801 & 0.891 & 1.000 \\
\bottomrule
\end{tabular}
\end{minipage}
\hfill
\begin{minipage}{0.48\textwidth}
\centering
\textbf{Empirical autocorrelation}

\vspace{0.3em}
\footnotesize
\begin{tabular}{c ccccc}
\toprule
$t$ & 5 & 10 & 15 & 20 & 25 \\
\midrule
 5  & 1.000 & 0.742 & 0.588 & 0.552 & 0.535 \\
10  & 0.742 & 1.000 & 0.884 & 0.761 & 0.758 \\
15  & 0.588 & 0.884 & 1.000 & 0.877 & 0.849 \\
20  & 0.552 & 0.761 & 0.877 & 1.000 & 0.891 \\
25  & 0.535 & 0.758 & 0.849 & 0.891 & 1.000 \\
\bottomrule
\end{tabular}
\end{minipage}

\caption{Theoretical versus empirical autocorrelation of stem cell counts at time points
$t = 5, 10, 15, 20, 25$. The empirical autocorrelation is calculated from realizations of
simulated stem cell count trajectories with initial count of 200, division rate of 0.2,
and division probabilities $p_1(t) = \frac{0.55}{1 + 0.005(t-4)^2}$,
$p_2(t) = \frac{0.15}{1 + (t-12)^2}$,
$p_4(t) = \frac{0.20}{1+(t-20)^2}$,
and $p_3(t) = 1-p_1(t)-p_2(t)-p_4(t)$.}
\label{tab:autocorrelation}
\end{table}

\begin{remark}
Suppose that
\[
P(t)=\int_0^t \bigl[p_1(u)-p_3(u)\bigr]\,du \longrightarrow -\infty
\qquad \text{as } t\to\infty.
\]
Then both the mean and the variance of $X(t)$ converge to zero.

The convergence $E[X(t)]\to0$ follows immediately from the explicit expression
\[
E[X(t)] = S_0 \exp\!\left\{ rP(t) \right\}.
\]
To analyze the variance, recall that
\[
V(t)
= r S_0 \exp\!\left\{ 2rP(t) \right\}
\int_0^t \exp\!\left\{ -rP(u) \right\}
\bigl[p_1(u)+p_3(u)\bigr]\,du.
\]
Since $0 \le p_1(t)+p_3(t)\le1$, we obtain the bound
\begin{equation}
\label{eq:variance_bound}
V(t)
\le r S_0 \exp\!\left\{ 2rP(t) \right\}
\int_0^t \exp\!\left\{ -rP(u) \right\}\,du.
\end{equation}
Applying l’Hôpital’s rule to the ratio
\[
\frac{\int_0^t \exp\!\left\{ -rP(u) \right\}\,du}
{\exp\!\left\{ -2rP(t) \right\}},
\]
we find
\[
\lim_{t\to\infty}
\frac{\int_0^t \exp\!\left\{ -rP(u) \right\}\,du}
{\exp\!\left\{ -2rP(t) \right\}}
=
\lim_{t\to\infty}
\frac{\exp\!\left\{ rP(t) \right\}}
{-2r\bigl[p_1(t)-p_3(t)\bigr]}.
\]
Under the assumption $P(t)\to-\infty$, the numerator converges to zero, while the denominator remains bounded away from zero whenever $p_1(t)-p_3(t)$ does not vanish identically. Consequently, the right-hand side of \eqref{eq:variance_bound} converges to zero, and hence
\[
\lim_{t\to\infty} V(t)=0.
\]
\end{remark}
Intuitively, this result is expected because over time, as differentiation events dominate self-renewal events, stem cells are lost faster than they are replenished and become extinct. Moreover, when the initial number of stem cells is large, for any fixed time point $t$ the stochastic process is close to their expected values. Specifically, we have the following limit result.
\begin{proposition}
For fixed $t$, as $S_0 \rightarrow \infty$,
\begin{equation}
    \frac{X(t) - S(t)}{S(t)} \xrightarrow{L_2} 0.
\end{equation}
\end{proposition}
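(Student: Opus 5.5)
The plan is to compute the $L_2$ norm directly from Proposition~\ref{prop:moments}. Since $\mathbb{E}[X(t)] = S(t)$, we have
\[
\mathbb{E}\left[\left(\frac{X(t)-S(t)}{S(t)}\right)^2\right] = \frac{\operatorname{Var}(X(t))}{S(t)^2} = r\int_0^t \frac{p_1(u)+p_3(u)}{S(u)}\,du .
\]
So the whole task reduces to showing that this integral tends to zero as $S_0\to\infty$ with $t$ fixed.

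The key step is to make the dependence on $S_0$ explicit. From the formula for the mean, $S(u) = S_0 \exp\{rP(u)\}$ with $P(u)=\int_0^u[p_1(v)-p_3(v)]\,dv$. The functions $p_j$ do not depend on $S_0$, so
\[
\frac{\operatorname{Var}(X(t))}{S(t)^2} = \frac{r}{S_0}\int_0^t \exp\{-rP(u)\}\bigl[p_1(u)+p_3(u)\bigr]\,du .
\]
Since $0\le p_1+p_3\le 1$ and $|p_1-p_3|\le 1$, we get $-P(u)\le u$. The integral is therefore bounded by $\int_0^t e^{ru}\,du = (e^{rt}-1)/r$, which is a constant independent of $S_0$. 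Hence the second moment is at most $(e^{rt}-1)/S_0$, and it tends to $0$ as $S_0\to\infty$. This gives the claimed $L_2$ convergence, and in fact at rate $O(S_0^{-1/2})$ for the $L_2$ norm.

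I do not expect a serious obstacle, since the variance formula has already been established. The one point that needs care is the assumption $V(0)=0$, which means the initial count is deterministic, $X(0)=S_0$; I would state this explicitly. I would also note that the formula from Proposition~\ref{prop:moments} is valid for every $S_0$, which holds because its derivation via the moment ODEs did not depend on the size of $S_0$.
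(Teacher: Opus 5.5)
Your proposal is correct and follows the paper's own proof: both reduce the squared $L_2$ norm to $\operatorname{Var}(X(t))/S(t)^2 = \frac{r}{S_0}\int_0^t \bigl[p_1(u)+p_3(u)\bigr]e^{-rP(u)}\,du$ and let $S_0\to\infty$. Your explicit bound $(e^{rt}-1)/S_0$ simply makes precise the paper's implicit step, namely that this integral is finite and does not depend on $S_0$.
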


\begin{proof} Since we have $E[X(t)] = S(t)$, we only need to show $\operatorname{Var}\Big( \frac{X(t) - S(t)}{S(t)} \Big) \longrightarrow 0$ as $ S_0 \longrightarrow \infty$.
\begin{equation}
    \operatorname{Var} \Big( \frac{X(t) - S(t)}{S(t)} \Big) = \frac{r}{S_0} \int_0^t \frac{p_1(u) + p_3(u)}{\exp \{ rP(u)\}} du \longrightarrow 0.
\end{equation}
\end{proof}
This result implies that, when starting with a large number of initial stem cells, the stem-cell population in the branching process can be well approximated by the deterministic model with the differential equation in (\ref{transition1}).
\subsection{Stopping Time}
Since viable stem cells are the only cell type with proliferative capability in our model, the process terminates when they become extinct. In this section, we analyze the stopping time of the process, defined as the time at which the number of viable stem cells reach 0.

We first consider the lower bound of the stopping times. Under the model assumption that interarrival times between two consecutive division events follows an exponential distribution with rate proportional to the current viable stem cells, the average interarrival time between two consecutive division events $T_{i+1} - T_i$ is $1/(rS_i)$. The minimum stopping times occurs when every event is a differentiation event, in which no new viable stem cell is created and the number of viable stem cells decreases by 1. Then the minimum expected stopping time is
$$\tau_{\min} = \frac{1}{rS_0} + \frac{1}{r (S_0 -1)} + \cdots + \frac{1}{r}.$$
Using the harmonic sum approximation,
\begin{equation}
    \tau_{\min} \approx (1/r) \log(S_0) + \gamma,
\end{equation}
where $\gamma \approx 0.5772$ is the Euler–Mascheroni constant. Note that this is a conservative lower bound of the stopping time since we assume that every division results in a loss of one stem cell (i.e. $p_3(t) = 1 \ \forall t$). Figure \ref{fig:sim_plot} (right) presents an example of stopping time distribution from simulated data. The minimum expected stopping time lies below all simulated stopping times.

We next fit the stopping times from simulated data using the inverse Gaussian distribution. The inverse Gaussian distribution is commonly used to model first passage or stopping times of processes such as Brownian motion with drift \cite{folks1978inverse}. In our context, the inverse Gaussian distribution provides a natural model for the time it takes for the number of viable stem cells to reach 0. The simulated stopping data is obtained from 100 realizations of the process with the initial count of 200, division rate of 0.2, and division probabilities $p_1(t) = \frac{0.55}{1 + 0.005(t-4)^2}$, $p_2(t) = \frac{0.15}{1 + (t-12)^2}$, $p_4(t) = \frac{0.20}{1+(t-20)^2}$, and $p_3(t) = 1-p_1(t)-p_2(t)-p_4(t)$. Since the support of the inverse Gaussian is the positive real line, we shift the stopping times by $-\tau_{\min}$. Using these data, we estimate the mean and shape parameter of the inverse Gaussian distribution by the maximum likelihood method. The estimated parameters are then used to fit the distribution to the simulated data, and the goodness-of-fit is assessed through standard distributional tests. The results of distributional tests in table \ref{tab:stopping_time_dist} and diagnostic plots in figure \ref{fig:stopping_time_dist} in the Supplementary Materials indicate that the inverse Gaussian distribution is a good fit for the shifted simulated stopping times.
\begin{table}[h]
    \centering
    \begin{tabular}{lcc}
        \toprule
        Distribution test & Test statistics & P-value \\
        \midrule
        Kolmogorov-Smirnov & 0.066 & 0.766 \\
        Anderson-Darling & 0.588 & 0.658 \\
        \bottomrule
    \end{tabular}
    \caption{Distributional tests of the shifted stopping times with the inverse Gaussian distribution.}
    \label{tab:stopping_time_dist}
\end{table}
\section{Likelihood Estimation} \label{mle}
In practice, the division probabilities $p_1(t), p_2(t), p_3(t), p_4(t)$ are unknown and must be estimated using the observed cell proliferation data from experiments. In this section, we
 a likelihood-based framework for parameter estimation. When we observe all the events, both the timing of the events and the number of cells, we can write down the full likelihood of the observed data
\begin{equation} \label{likelihood}
    \begin{split}
        & \LL(\boldsymbol{\theta}; \mathbb{T},\mathbb{X},  \mathbb{Y},\mathbb{Z}) = \prod_{i=1}^n P(\Delta X_i, \Delta Y_i, \Delta Z_i \vert T_i) f(\Delta T_i\vert X_{i-1}),
    \end{split}
\end{equation}
where $\boldsymbol{\theta}$ denotes the set of parameters for $p_j(t)$ and division rate $r$, $P(\Delta X_i, \Delta Y_i, \Delta Z_i \vert T_i)$ is the distribution of the change of cell counts in (\ref{marginal}), and $f(\cdot\vert  X_{i-1})$ denotes the density function of the exponential distribution with rate $rX_{i-1}$. Note that in the likelihood for fully observed data, we only need to use the observed viable stem cells and nonviable stem cells to determine which division event occurs. The likelihood can be optimized to obtain the maximum likelihood estimates of $p_j(t)$ parameters and $r$. In practical experiments, we may encounter situations in which viable and nonviable stem cells cannot be distinguished. As a result, instead of observing $X(t)$ and $Z(t)$, we observe the sum of the viable and nonviable stem cells denoted by $M(t)$. As seen in (\ref{likelihood}), the unobserved number of nonviable stem cells plays an important role in exact likelihood evaluation. A na\"ive approach to evaluate the likelihood when the nonviable stem cells are not distinguishable from viable stem cells is to sum out all the trajectories of possible numbers of nonviable stem cells, which is computationally inefficient. In such cases, we employ a modification of the forward algorithm, which is a dynamic programming application developed for hidden Markov models  \cite{rabiner2002tutorial}, to write the likelihood of partially observed data.

We first define the forward variables $\alpha_k(u)$ (for $k = 1, \cdots, n$) as the likelihood that there are $u$ unobserved nonviable stem cells at $k$-th division events, given all observations $(M_i, Y_i, T_i)$ up to and including $k$-th division events
$$\alpha_k(u) = \LL(\mathbb{T}_k,\mathbb{Y}_k, \mathbb{M}_k, Z_k = u, \boldsymbol{\theta}),$$
where $u$ is an integer and $0 \leq u \leq M_k$, $\mathbb{T}_k, \mathbb{Y}_k, \mathbb{M}_k$ are the observations of event times, differentiated cells, and total stem cells up to $k$-th division events, respectively. The forward variables can be computed recursively by
$$\alpha_{k+1} (v) = \sum_{u \in\{v, v-1\}} \alpha_k(u) \cdot h_k(u,v), \quad 0 \leq k \leq n-1$$
where $0 \leq v \leq M_{k+1}$. Note that since we assume $T_0 = 0, M_0 = S_0, Y_0 = 0, Z_0 = 0$, we have $\alpha_0(0) = 1$ and $\alpha_0(u) = 0$ for $u >0$. Since the number of nonviable stem cell either stay the same or increase by 1 after each division, we only need to sum over these two cases. Here, $h_k(u,v)$ are analogous to transitional probabilities from having $u$ unobserved nonviable stem cells at step $k$ to having $v$ unobserved nonviable stem cells at step $k+1$. For $i \leq j \leq 4$
\begin{equation*}
    \begin{split}
        h_k(u,v) = \begin{cases}
            h_{jk}(u,v), & \text{event}\ j \ \text{at} \ T_{k+1}\\
            0, & \text{otherwise},
        \end{cases}
    \end{split}
\end{equation*}
$h_{jk}(u,v) = r(M_k-u) e^{-r(M_k-u) (T_{k+1}-T_k)}\cdot p_j(T_{k+1})$. The events in terms of $(Y,M,Z)$ are:
\begin{itemize}
    \item Event 1: $Y_{k+1} = Y_k, M_{k+1} = M_k +1, v = u$ with probability $p_1(T_{k+1})$,
    \item Event 2: $Y_{k+1} = Y_k +1, M_{k+1} =M_k, v = u$ with probability $p_2(T_{k+1})$,
    \item Event 3: $Y_{k+1} = Y_k +2, M_{k+1} = M_k -1, v = u$ with probability $p_3(T_{k+1})$,
    \item Event 4: $Y_{k+1} = Y_k, M_{k+1} = M_k + 1, v = u+1$ with probability $p_4(T_{k+1})$.
\end{itemize}
The likelihood is given by $$\LL_n = \sum_{0 \leq u \leq M_n} \alpha_n(u).$$
To avoid the issue of underflow or overflow in computation, we apply a standard normalization technique to the forward algorithm. The normalized forward variable is defined as
$$\bar{\alpha}_k(u) = \frac{\alpha_k(u)}{\LL_k},$$
where $\LL_k = \displaystyle\sum_{0 \leq u \leq M_k} \alpha_k (u), \ k=1, \cdots, n$. The recursion of the normalized forward variable is then given by
\begin{equation}
    \begin{split}
        \bar{\alpha}_{k+1}(v) & = \frac{\alpha_{k+1} (v)}{\LL_{k+1}} =\frac{1}{\LL_{k+1}} \sum_{u = v-1,v} \alpha_k(u) h_k(u,v) = \frac{\LL_k}{\LL_{k+1}} \sum_{u = v-1,v} \bar{\alpha}_k h_k(u,v).
    \end{split}
\end{equation}
For $0 \leq k \leq n-1$, we define the sequential likelihood ratio
\begin{equation}
    \begin{split}
        d_{k+1} & = \frac{\LL_{k+1}}{\LL_k} \\
        & = \frac{1}{\LL_k} \sum_{0 \leq v\leq M_{k+1}} \alpha_{k+1}(v) \\
        & = \frac{1}{\LL_k} \sum_{0 \leq v \leq M_{k+1}} \sum_{u = v-1, v} \alpha_k(u) h_k (u,v)\\
        & = \sum_{0\leq v\leq M_{k+1}} \sum_{u=v-1,v} \bar{\alpha}_k(u) h_k(u,v).
    \end{split}
\end{equation}
The log-likelihood of the observed data is obtained by taking the logarithm of the sequential likelihood ratios, and performing the telescopic summation
$$\log(\LL_n)=\sum_{k=0}^{n-1}  \log(d_{k+1}).$$
Algorithm \ref{forward_algorithm} provides a formal description of the normalized forward algorithm. It can be used to optimize and obtain the maximum likelihood estimation. The results of the algorithm implementation and parameter estimation are presented in the next section.

\begin{algorithm}
\caption{Normalized Forward Algorithm to Compute Log-likelihood ($\log \mathbb{L}_n$)}
\begin{algorithmic}[1]
\Require
Parameter of probabilities $p_j(t)$ and division rate $r$, observed event times $\mathbb{T} = (t_1, \cdots, t_n),$ observed total stem cells $\mathbb{M} = (m_1, \cdots m_n)$ and differentiated cells $\mathbb{Y} = (y_1, \cdots, y_n)$.
\Ensure Log-likelihood $\log \mathbb{L}_n$.

\Statex \textbf{Initialization:}
    \State $\bar{\alpha}_0(0) = 1$
\Statex \textbf{Recursion:}
\For{$k = 0$ to $n-1$}
    \For{$u \in \{v-1,v\}$, $0 \leq v \leq M_{k+1}$}
        \State $h_k(u,v)$
    \EndFor

    \State $d_{k+1}
    = \displaystyle\sum_{0\leq v\leq M_{k+1}} \sum_{u=v-1,v} \bar{\alpha}_k(u) h_k(u,v)$

    \For{$ u \in \{v-1, v\}$}
        \State $\bar{\alpha}_{k+1}(v)
        = \frac{1}{d_{k+1}}
        \displaystyle\sum_{u} \bar{\alpha}_k(u) h_k(u,v)$
    \EndFor
\EndFor

\Statex \textbf{Termination:}
\State $\log \mathbb{L}_n
    = \displaystyle\sum_{k=0}^{n-1} \log d_{k+1}$

\State \Return $\log \mathbb{L}_n$

\end{algorithmic}
\label{forward_algorithm}
\end{algorithm}
\section{Simulation Results} \label{simulation}
In this section, we assess the performance of the maximum likelihood estimator (MLE) for the branching process model of the stem cell proliferation process. We consider both fully observed cell count data and scenarios in which only partial observations are available. Simulated cell count trajectories are generated as follows. Starting from a fixed number of viable stem cells, the time until the next event was drawn from an exponential distribution with rate proportional to the current number of viable stem cells. At each event, a stem cell either divides or differentiates according to probabilities $p_j(t)$. In our simulation, the probabilities $p_j(t)$ are time-dependent with the functional form described in (\ref{lorentzian}). After each event, the population of each cell type is updated, and the process is repeated until there are no viable stem cells left. For partially observed data, we do not observe the counts of viable and nonviable stem cells separately, but only observe their combined total. In our estimation method, we only need to track the timing and changes in cell counts of each division event rather than the full lineage of individual cells. As a result, our estimation method reduces the need for extensive experimental data collection. We experimented with three parameter configurations of the division probabilities $p_j(t)$ to assess the performance of the maximum likelihood estimator. Figure \ref{fig:prob} displays the plots of probabilities $p_j(t)$, $j = 1,2,3,4$ with these three different parameter configurations.
Configuration 1 represents an early dominance of the symmetric self-renewal divisions with a large initial peak in $p_1(t)$. Configuration 2 provides a balanced case in which functions $p_1(t), p_2(t), p_4(t)$ have the equal maximum and rate of decay, but the maximums occur at well-separated times. Configuration 3 emphasizes on a strong peak in $p_2(t)$, which is the asymmetric division into one stem cell and one differentiated cell. For each parameter configuration, 100 replicate trajectories are simulated. The number of initial viable stem cells and division rate used in the simulation are 200 and 0.2, respectively.
\begin{figure}
    \centering
    \includegraphics[width=1\linewidth]{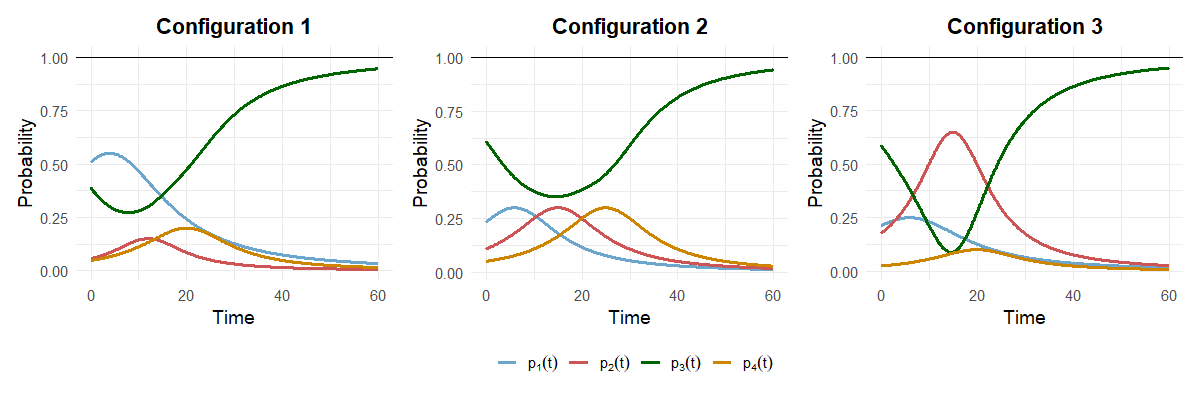}
    \caption{Time-dependent probability functions for division and differentiation events under three Lorentzian parameter configurations. Each panel displays the four probabilities $p_1(t), p_2(t), p_3(t),$ and $p_4(t)$ defined in equation (\ref{lorentzian}).}
    \label{fig:prob}
\end{figure}
The likelihood of fully observed data is computed based on equation (\ref{likelihood}). For numerical stability, we take the logarithm of the likelihood of fully observed data
\begin{equation*}
    \begin{split}
        & \ell(\boldsymbol{\theta},\mathbb{T}, \mathbb{X}, \mathbb{Y}, \mathbb{Z}) \\
        &= \sum_{i=1}^k \Big[ \log \Big(\frac{p_1}{1+c_1(T_i-m_1)^2}\Big) I_{(\Delta X_i = 1, \Delta Z_i = 0)} \\
        & + \log \Big(\frac{p_2}{1+c_2(T_i-m_2)^2}\Big) I_{(\Delta X_i = 0,\Delta Z_i = 0)} \\
        & + \log \Big(\frac{p_4}{1+c_4(T_i-m_4)^2}\Big) I_{(\Delta X_i = 0,\Delta Z_i = 1)} \\
        & + \log \Big(1- \frac{p_1}{1+c_1(T_i-m_1)^2}- \frac{p_2}{1+c_2(T_i-m_2)^2} - \frac{p_4}{1+c_4 (T_i-m_4)^2} \Big) I_{(\Delta X_i = -1,\Delta Z_i = 0)} \\
        & + \log r + \log S_{i-1} - rS_{i-1}\Delta_{T_1} \Big].
    \end{split}
\end{equation*}
The closed-form solution of the MLE for division rate $r$ can be obtained by taking the derivative of the log-likelihood with respect to $r$. The likelihood of the partially observed cell counts is computed using the forward algorithm described in section \ref{mle}. Parameters are estimated by numerical maximization of the likelihood: for fully observed data, we use the differential evolution (DE) algorithm for parameters of $p_j(t)$, while the MLE of division rate can be obtained through its closed-form solution. DE is a global optimization algorithm which maintains a population of candidate solutions and iteratively improves them through mutation, crossover, and selection. In the mutation step, new candidate solutions are generated by adding a scaled difference between two population members to a third member, allowing the algorithm to explore the parameter space efficiently. These candidates are then mixed with existing solutions through crossover to increase diversity of the population of candidates, with the best-performing candidates are selected to retain for the next generation. DE is particularly well-suited for complex or non-convex likelihood surfaces, as it does not require gradient information and is less likely to be trapped in local optima \cite{storn1997differential, ardia2011differential, mullen2011deoptim}. For partially observed data, DE is combined with the gradient-based Broyden-Fletcher-Goldfarb-Shanno (BFGS) optimization, with the gradient computed by automatic differentiation \cite{fournier2012ad, kristensen2016tmb} to refine estimates near local maxima. This hybrid approach takes advantage of the global search capability of DE to explore the parameter space broadly, while BFGS with automatic differentiation refines solutions locally, improving convergence speed and numerical stability. Note that although we employ automatic differentiation to evaluate the gradient of the normalized forward-algorithm-based likelihood, the gradient can also be computed dynamically. The full derivation is provided in the Supplementary Materials.
\begin{table}[h]
    \centering
    \begin{tabular}{lccccc}
        \toprule
         Parameter & True value & Mean & Median  & 5.0 Percentile & 95.0 Percentile \\
         \midrule
        \multicolumn{4}{l}{\textbf{Parameter configuration 1}} \\
         $p_1$ & 0.550 &  0.560 & 0.557 & 0.524 & 0.602 \\
         $p_2$ & 0.150 & 0.151 & 0.148 & 0.123 & 0.181\\
         $p_4$ & 0.200 & 0.201 & 0.200 & 0.175 &     0.242 \\
         $c_1$ & 0.005 & 0.00504 & 0.00491 & 0.00365 & 0.00704\\
        $c_2$ & 0.012 & 0.01270 & 0.01170 & 0.00653 & 0.02050 \\
        $c_4$ & 0.008 & 0.00854 & 0.00826 & 0.00543 & 0.01280\\
        $m_1$ & 4 & 3.500 & 3.740 & 0.549 & 5.750  \\
        $m_2$ & 12 & 11.900 & 11.900 & 10.100 & 13.500  \\
        $m_4$ & 20 & 19.900 & 20.000 & 18.000 & 21.800 \\
        $r$ & 0.200  & 0.199 & 0.199 &  0.191 & 0.208\\
        \multicolumn{4}{l}{\textbf{Parameter configuration 2}} \\
        $p_1$ & 0.300 & 0.306 & 0.303 & 0.256 & 0.373\\
        $p_2$ & 0.300 & 0.304 & 0.303 & 0.250 & 0.354 \\
        $p_4$ & 0.300 & 0.306 & 0.305 & 0.246 & 0.372\\
        $c_1$ & 0.008 & 0.00894 & 0.00790 & 0.00389 & 0.01690\\
        $c_2$ & 0.008 & 0.00872 & 0.00800 & 0.00465 & 0.01390 \\
        $c_4$ & 0.008 & 0.00882 & 0.00831 & 0.00442 & 0.01430\\
        $m_1$ & 6 & 5.650 & 5.920 & 2.21 & 8.29 \\
        $m_2$ & 15 & 15.100 & 15.200 & 13.300 & 17.2 \\
        $m_4$ & 25 & 24.900 & 25.000 & 22.000 & 28.500\\
        $r$ & 0.200 & 0.199 & 0.199 & 0.186 & 0.211\\
        \multicolumn{4}{l}{\textbf{Parameter configuration 3}} \\
        $p_1$ & 0.250 & 0.258 & 0.258 & 0.223 & 0.295\\
        $p_2$ & 0.650 & 0.649 & 0.652 & 0.585 & 0.709  \\
        $p_4$ & 0.100 &  0.105 & 0.102 & 0.0713 & 0.147 \\
        $c_1$ & 0.005 & 0.00560 & 0.00526 & 0.00272 & 0.00938 \\
        $c_2$ & 0.012 & 0.01220 & 0.01220 & 0.00887 & 0.01590 \\
        $c_4$ & 0.008 & 0.01070 & 0.00927 & 0.00326 & 0.02130  \\
        $m_1$ & 6 & 5.590 & 6.130 & 0.501 & 9.290\\
        $m_2$ & 15 & 15.000 & 15.000 & 14.200 & 15.700 \\
        $m_4$ & 20 & 19.500 & 19.500 & 16.000 & 23.400 \\
        $r$ & 0.200 & 0.199 & 0.198 & 0.189 & 0.210\\
        \bottomrule
    \end{tabular}
    \caption{Parameter estimations under fully observed data for the three parameter configurations.}
    \label{tab:par_est_full}
\end{table}

For the fully observed data case (Table \ref{tab:par_est_full}), the means and medians of the estimates closely match the true parameter values and the 5th-95th percentile intervals are narrow across all three configurations. Table \ref{tab:par_est_forward} summarizes the estimates obtained with the normalized forward algorithm for scenarios in which we cannot distinguish viable and nonviable stem cells. The result shows a slight increase in the variability of the estimates as the percentile ranges are slightly wider. The true parameters are still contained within the corresponding intervals for all parameters. Across three configurations, the location parameters $m_i$ are estimated particularly well, whereas the estimates of parameters $c_i$ experience higher degree of sensitivity to loss of information due to unobserved data.
\begin{table}[h]
    \centering
    \begin{tabular}{lccccc}
        \toprule
         Parameter & True value & Mean & Median  & 5.0 Percentile & 95.0 Percentile \\
        \midrule
        \multicolumn{4}{l}{\textbf{Parameter configuration 1}} \\
        $p_1$ & 0.550 & 0.565 & 0.563 &  0.512 & 0.621 \\
        $p_2$ & 0.150 & 0.151 & 0.149 & 0.121 & 0.182 \\
        $p_4$ & 0.200 & 0.210 & 0.205 &  0.170 & 0.261\\
        $c_1$ & 0.005 & 0.00539 & 0.00513 & 0.00286 & 0.00875 \\
        $c_2$ & 0.012 & 0.01280 & 0.01180 & 0.00617 & 0.02070\\
        $c_4$ & 0.008 & 0.01030 & 0.00959 & 0.00447 & 0.01920\\
        $m_1$ & 4 & 3.510 & 3.700 & 0.01850 & 6.070 \\
        $m_2$ & 12 & 11.900 & 12.000 & 10.000 & 13.500 \\
        $m_4$ & 20 & 19.900 & 20.100 & 15.700 & 22.900 \\
        $r$ & 0.200  & 0.198 & 0.197 &  0.187 & 0.213\\
        \multicolumn{4}{l}{\textbf{Parameter configuration 2}} \\
        $p_1$ & 0.300 & 0.313 & 0.315  & 0.246 & 0.375\\
        $p_2$ & 0.300 & 0.304 & 0.305 &
        0.247 & 0.355 \\
        $p_4$ & 0.300 & 0.311  & 0.307  & 0.223 & 0.409\\
        $c_1$ & 0.008 & 0.00979 & 0.00898 & 0.00267 & 0.02000 \\
        $c_2$ & 0.008 & 0.00879 & 0.00808 &
        0.00446 & 0.01460\\
        $c_4$ & 0.008 & 0.01000 & 0.00892 & 0.00351 & 0.01910 \\
        $m_1$ & 6 & 5.610  & 6.010 & 0.184 & 9.030 \\
        $m_2$ & 15 & 15.100   & 15.100   &  13.200 & 17.200 \\
        $m_4$ & 25 & 24.900   & 25.000     & 21.200 & 28.800 \\
        $r$ & 0.200  & 0.197  & 0.196  & 0.183 & 0.218 \\
        \multicolumn{4}{l}{\textbf{Parameter configuration 3}} \\
        $p_1$ & 0.250 & 0.271 & 0.275 & 0.208 & 0.329\\
        $p_2$ & 0.650 & 0.658 & 0.662 & 0.587 & 0.714 \\
        $p_4$ & 0.100 & 0.135 & 0.129 & 0.08750 & 0.207 \\
        $c_1$ & 0.005 & 0.00746 & 0.00711 & 0.00179 & 0.01460 \\
        $c_2$ & 0.012 & 0.01260 & 0.01230 & 0.00906 & 0.01680 \\
        $c_4$ & 0.008 & 0.02440 & 0.01640 & 0.00314 & 0.06360 \\
        $m_1$ & 6 & 6.610 & 6.480 & 0.227 & 15.100 \\
        $m_2$ & 15 & 15.000 & 15.000 & 14.200 & 15.800  \\
        $m_4$ & 20 & 18.500  & 21.000   & 0.294 & 26.800 \\
        $r$ & 0.200  & 0.198   & 0.196 & 0.183 & 0.224 \\
        \bottomrule
    \end{tabular}
    \caption{Parameter estimation when viable and nonviable stem cells cannot be distinguished using normalized forward algorithm and hybrid optimization.}
    \label{tab:par_est_forward}
\end{table}
We assess the fit of the model by comparing the expected cell count trajectories generated from the estimated parameters from the partially observed data and forward algorithm likelihood with the corresponding observed simulated data. We use the ratio between the observed and expected cell counts
$$ \frac{X(t)}{\hat{X}(t)}, \frac{Z(t)}{\hat{Z}(t)}, \frac{M(t)}{\hat{M}(t)}, \frac{Y(t)}{\hat{Y}(t)},$$
goodness-of-fit assessment metrics. If the ratios are close to 1, the model provides a good estimation of the trajectory of cell counts. Figure \ref{fig:prediction_ratio} depicts the observed-to-expected cell count ratios across four panels: viable stem cells, nonviable stem cells, total stem cells, and differentiated cells. The ratios for differentiated cells are tightly clustered around one, indicating a good fit of the model to the observed data. Total stem cell ratios are also well estimated. The ratios for nonviable stem cells show the greatest variability, especially at early time points. Overall, these results suggest that the estimation of the model parameters accurately captures the dynamics of differentiated and total stem cells, with larger discrepancies primarily occurring for nonviable stem cells during the initial phase.
\begin{figure}[h]
    \centering
    \includegraphics[width=1\linewidth]{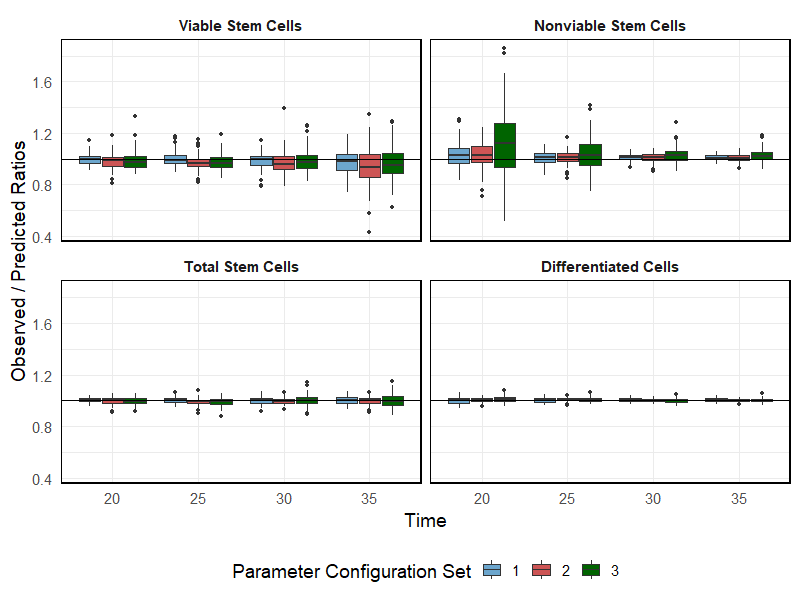}
    \caption{Ratios of observed-to-expected cell counts. The predicted cell counts are generated by using the estimated parameters (in table \ref{tab:par_est_forward}) for partially observed data using forward algorithm likelihood.}
    \label{fig:prediction_ratio}
\end{figure}

\section{Application to Bipotent Megakaryocytic Erythroid Progenitor Clonal Expansion and Differentiation} \label{application}
We apply our branching process model to bipotent megakaryocytic erythroid progenitor (MEP) cell proliferation. MEPs are hemtopoietic progenitors that generate both megakryocytic destined progenitors (MkPs) and erythroid destined progenitors (ErPs), which ultimately produce platelets and red blood cells, respectively. MEPs are not stem cells, however, they display stem cell-like behavior by undergoing self-renewal divisions that maintain bipotency or differentiation events that yield progenitors restricted to a single lineage. The data on MEP clonal expansion and differentiation, collected by Scanlon et al. \cite{Sc2022} using single-cell time-lapse imaging with \textit{in situ} fluorescence staining, indicated that MEPs exhibit division patterns similar to stem cells, including symmetric divisions that produce two bipotent MEPs, asymmetric divisions that generate one bipotent MEP and one megakaryocytic or erythroid progenitor, and differentiative divisions that result in lineage-committed progeny. Furthermore, the probability that MEPs undergo either symmetric or asymmetric self-renewal changes dynamically over time. In the experiment, MEPs were cultured in a medium containing the standard cytokine cocktail: recombinant human Interleukin-3, Interleukin-6, Stem Cell Factor, thrombopoietin (TPO), and erythropoietin (EPO). An additional group of MEPs were also cultured in the same medium without TPO to assess its role in regulating MEPs fate decision.
\begin{figure}[h]
    \centering
    \includegraphics[width=1\linewidth]{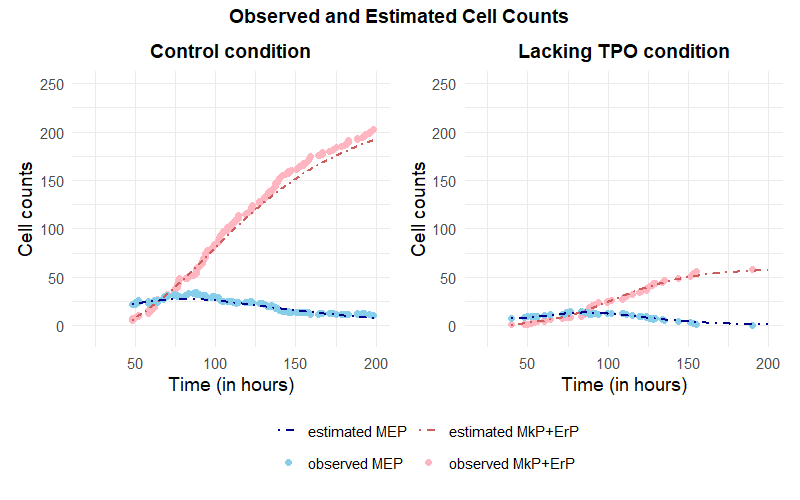}
    \caption{Observed and estimated cell counts calculated from the estimated parameters for the experimental MEP clonal expansion and differentiation data.}
    \label{fig:MEPexperiement}
\end{figure}

The originally published data in Scanlon et al. \cite{Sc2022} recorded each cell's generation, first and last observation times, and division outcomes. This allowed each cell to be followed from its parent to any daughter cells resulting from division. Cell counts were updated based on the observed outcomes, organized according to the time of division. Accurate timing is critical for preserving the time-dependent patterns to estimate the rate of division and the probability of division events. Therefore, divisions occurring very early in the experiment, where exact timing could not be precisely determined, were excluded to ensure reliable lineage tracking. The processed data thus capture the counts of MEPs and the combined MkP and ErP progeny at each division event. Since there are very small numbers of MEPs retained at the end of the process (3 MEPs retained in the data of MEPs cultured in control condition and no MEP retained in the data of MEPs cultured in lacking TPO condition), we apply the model without the nonviable stem cells and estimate the parameters using the fully observed data likelihood estimation.

Figure \ref{fig:MEPexperiement} compares the experimentally observed counts of MEPs and of the total differentiated progeny with the estimated counts computed by the model based on the estimated parameters. We use the functions described in (\ref{lorentzian}) for the division probabilities with the exceptions of $p_4(t) = 0 \ \forall t$. In the TPO-depleted condition, the estimated counts closely match the observed numbers. In the control condition, the model slightly underestimates the number of total differentiated progeny at later time points, but the overall trends are still well-represented. The estimated division rate of MEPs in the control condition and lacking TPO condition is 0.059 and 0.045, respectively. Consistent with the result in Scanlon et al. \cite{Sc2022}, the decreased division rate observed in the MEPs cultured in the TPO-deplete condition reflects an increase in the time between consecutive division compared with those cultured in control condition.

\section{Discussion} \label{discussion}
In this work, we have formulated a continuous-time branching process model for cell proliferation with time-dependent division probabilities. Unlike classical branching models with fixed offspring distribution, our framework accommodates proliferation processes in which the cell fate decisions may evolve during development.

One key contribution in this work is the development of the likelihood-based inference procedures under two different observation scenarios. When all division events and cell counts are observable, we derived a full likelihood that allows direct parameter estimation. More importantly, when cell counts are partially unobservable - viable and non-viable stem cells cannot be distinguished - we constructed a forward algorithm likelihood. This approach avoids the need to calculate the likelihood from all possible divisions resulting in the observed cell counts, greatly reducing the complexity of the likelihood computation. Simulation studies demonstrate that the proposed estimation recovers the time-dependent division probabilities with good accuracy under both scenarios from the aggregate cell count data and the timing of the division events. In both scenarios, our method does not require to track individual cell lineage, which are often difficult to obtain in experiments. Our results suggest that meaningful information about proliferation probabilities can still be recovered where lineage tracking is not feasible. From an experimental perspective, this substantially reduces the work required to track and collect data.

Our proposed framework can be extended in useful ways to study more complex cell proliferation processes. The number of division outcomes to can be expanded to include additional progenitor or differentiated cell types, leading to higher-dimensional multi-type branching processes. In addition, while the current model allows division probabilities to be time-dependent, they can be generalized to depend on other covariates, such as spatial location or local cell density. We expect that our approach will be useful for analyzing a wide range of biological processes in which cell fate decisions evolve dynamically and are only partially observable.

\section*{Data Availability}
The simulation code, processed data and analysis code for the experiment data example are available at \url{https://github.com/huyendn/branching_process_cell}. The original dataset used in the example is publicly available and was obtained from Scanlon et al. \cite{Sc2022}, where details on access are provided.

\section*{Supplementary Materials} \label{supplement}
\subsection*{Fitting Inverse Gaussian Distribution to the Stopping Time}
The simulated stopping data is obtained from 100 realizations of the process with the initial count of 200, division rate of 0.2, and division probabilities $p_1(t) = \frac{0.55}{1 + 0.005(t-4)^2}$, $p_2(t) = \frac{0.15}{1 + (t-12)^2}$, $p_4(t) = \frac{0.20}{1+(t-20)^2}$, and $p_3(t) = 1-p_1(t)-p_2(t)-p_4(t)$. The simulated stopping time data is then shifted by $-\tau_{\min}$. The maximum likelihood estimates of mean and shape parameter of the inverse Gaussian distribution are shown in table \ref{tab:stopping_time_est}. The estimates are then used to fit the distribution to the simulated data.
\begin{table}[h]
    \centering
    \begin{tabular}{lcc}
        \toprule
        Parameter & Mean & Shape \\
        Estimates & 36.858 & 586.698 \\
        \bottomrule
    \end{tabular}
    \caption{Maximum likelihood estimates of inverse Gaussian distribution parameters with the shifted simulated stopping time.}
    \label{tab:stopping_time_est}
\end{table}
\begin{figure}
    \centering
    \includegraphics[width=1\linewidth]{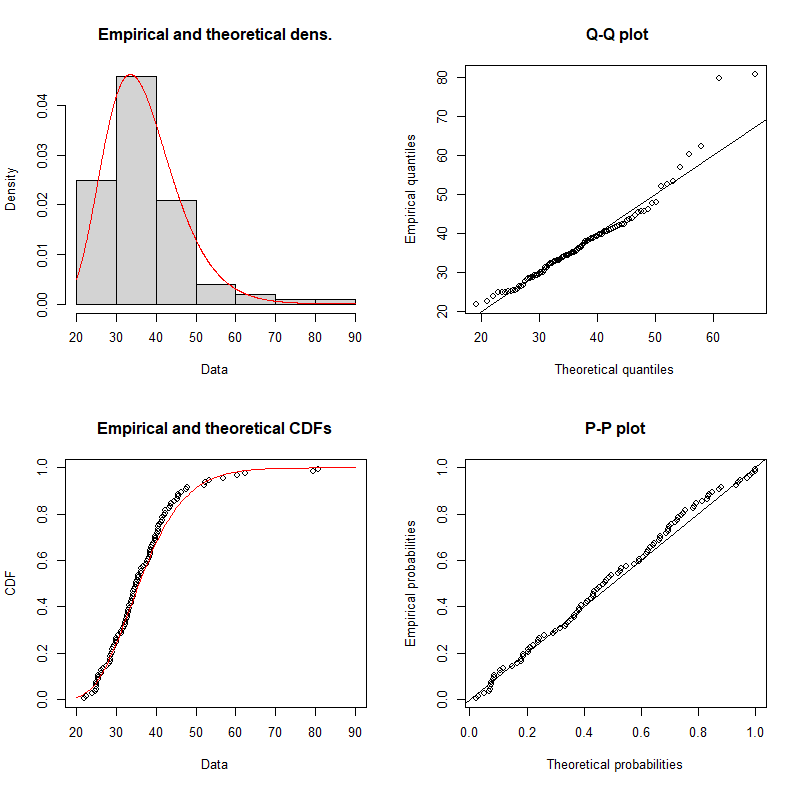}
    \caption{Diagnostic plots assessing the fit of the inverse Gaussian distribution to the stimulated stopping times. The fitted distribution is based on parameters estimated using maximum likelihood.}
    \label{fig:stopping_time_dist}
\end{figure}
\subsection*{Gradient of the Forward Likelihood}
For $1 \leq v \leq M_{k+1}$, the recursion of the normalized forward variable is given by
$$\bar{\alpha}_{k+1}(v)  = \frac{\LL_k}{\LL_{k+1}} \sum_{u = v-1,v} \bar{\alpha}_k h_k(u,v) = \frac{1}{d_{k+1}} \sum_{u = v-1,v} \bar{\alpha}_k h_k(u,v).$$
Since $$d_{k+1} = \displaystyle \sum_{0 \leq v \leq M_{k+1}} \displaystyle\sum_{u = v-1,v} \bar{\alpha}_k h_k(u,v),$$ we have the gradient using the chain rule $$ d'_{k+1} = \displaystyle \sum_{0 \leq v \leq M_{k+1}} \displaystyle\sum_{u = v-1,v} \Big[ \bar{\alpha}'_k h_k(u,v) + \bar{\alpha}_k h'_k(u,v)\Big].$$
Using the chain rule again, the recursion of the gradient of the normalized forward variable is given by
\begin{equation}
    \begin{split}
        \bar{\alpha}_{k+1}(v) = \frac{d'_{k+1}}{(d_{k+1})^2} \sum_{u = v-1, v} \bar{\alpha}_k h_k(u,v) + \frac{1}{d_{k+1}} \sum_{u = v-1, v} \Big[ \bar{\alpha}'_k h_k(u,v) + \bar{\alpha}_k h'_k(u,v) \Big].
    \end{split}
\end{equation}
The gradient of the log-likelihood is then
$$ (\log \LL_n)' = \sum_{k=1}^{n-1} \frac{d'_{k+1}}{d_{k+1}}.$$
\subsection*{R packages}
All analyses, simulations, and computational implementations in this study were performed using R version 4.4.2.\cite{R-base}. Distribution fitting and goodness-of-fit diagnostics were performed using packages \textbf{fitdistrplus, goftest,} and \textbf{stats} \cite{fitdistrplus2015, goftest2021, R-base}. Likelihood-based parameter estimation via differential evolution algorithm, gradient-based optimization, and automatic differentiation were carried out using \textbf{DEoptim}, \textbf{parallelly}, \textbf{base R}, and \textbf{TMB} \cite{mullen2011deoptim,  bengtsson2025parallelly, R-base,kristensen2016tmb}. Figures were generated using \textbf{ggplot2} \cite{wickham2016ggplot2} and base R \cite{R-base}.
\bibliography{cell_arxiv}

\end{document}